\newtheorem{theorem}{Theorem}
\newtheorem{lemma}[theorem]{Lemma}
\newtheorem{definition}{Definition}
\newtheorem{example}{Example}
\newtheorem{remark}{Remark}
\begin{document}
%
\title{On the 4-Adic Complexity of Quaternary Sequences with Ideal Autocorrelation}
%
%
%

\author{Minghui~Yang,
        Shiyuan~Qiang,
        Xiaoyan~Jing,
        Keqin~Feng,
        and Dongdai~Lin

\thanks{The material about Theorems 5-6 (excluding examples) in this paper was presented at the IEEE International Symposium on Information Theory (ISIT), Victoria, Australia, July 12-July 20, 2021.}
\thanks{The work was supported by the State Key Program of National Natural Science Foundation of China under Grant 12031011.}
\thanks{Minghui Yang and Dongdai Lin are with State Key Laboratory of Information Security, Institute of Information Engineering, Chinese Academy of Sciences, Beijing 100093, China (e-mail:  yangminghui6688@163.com; ddlin@iie.ac.cn).}
\thanks{Shiyuan Qiang is with the Department of Applied Mathematics, China Agricultural University, Beijing 100083, China (e-mail: qsycau\_18@163.com).}
\thanks{Xiaoyan Jing is with the Research Center for Number Theory and Its Applications, Northwest University, Xi'an 710127, China (e-mail: jxymg@126.com).}
\thanks{Keqin Feng is with the Department of Mathematical Sciences, Tsinghua University, Beijing 100084, China (e-mail: fengkq@tsinghua.edu.cn).}
}

%
%

\markboth{}%
{Shell \MakeLowercase{\textit{et al.}}: Bare Demo of IEEEtran.cls for IEEE Journals}
%



\maketitle

\begin{abstract}
In this paper, we determine the 4-adic complexity of the balanced quaternary sequences of period $2p$ and $2(2^n-1)$ with ideal autocorrelation defined by Kim et al. (ISIT, pp. 282-285, 2009) and Jang et al. (ISIT, pp. 278-281, 2009), respectively. Our results
show that the 4-adic complexity of the quaternary
sequences defined in these two papers is large enough to resist the attack of the rational
approximation algorithm.

\end{abstract}

\begin{IEEEkeywords}
 4-adic complexity, balance, ideal autocorrelation, quaternary sequences, the rational
approximation algorithm
\end{IEEEkeywords}

%
\IEEEpeerreviewmaketitle

\section{Introduction}

With the development of correlation attack and algebraic attack, it is becoming the main trend to use the nonlinear feedback shift register
sequences with pseudorandom property as the driving sequences in stream
cipher design. The feedback with carry shift register (FCSR) proposed by \cite{K4} and \cite{K2} is a kind of
generator which can produce nonlinear sequences quickly.

Balanced binary and quaternary sequences with good autocorrelation play important roles in communication and cryptography systems. The $d$-adic complexity $\Phi_d(s)$ measures the smallest length of FCSR which generates the sequence $s$ over $\mathbf{Z}/(d)$. Sequences over $\mathbf{Z}/(d)$ with low $d$-adic complexity are susceptibly decoded by the rational approximation algorithm, see \cite{K4}, [8-9]. Particularly, a quaternary sequence $s$ can be decoded by the rational approximation algorithm with $6\Phi_4(s)+16$ consecutive bits. Hence, the 4-adic complexity $\Phi_4(s)$ of a safe sequence $s$ with period $N$ should exceed $\frac{N-16}{6}$. There are numerous results about the 2-adic complexity of binary sequences with good autocorrelation, see [2-3], [11-14], for example. However, the 4-adic complexity of quaternary sequences with good autocorrelation has not been studied so fully and there are few quaternary sequences with good autocorrelation whose 4-adic complexity is known, see \cite{Q}. This may pose risk to communication and cryptography system.

In this paper, we determine the 4-adic complexity of the balanced quaternary sequences of even period $2p$ and $2(2^n-1)$ with ideal autocorrelation defined in \cite{Kim} and \cite{Jang}, respectively.  Our results show that the 4-adic complexity of the quaternary sequences with period $2p$ and $2(2^n-1)$ defined in these two papers is larger than $\frac{2p-16}{6}$ and $\frac{2(2^n-1)-16}{6}$ respectively. Hence they are safe enough to resist the attack of the rational approximation algorithm.

\section{Preliminaries}\label{sec2}

In the application of communication and cryptography, balanced sequences with good autocorrelation property are preferred.

For a sequence $g=(g_0, g_1, \ldots, g_{N-1})$ over $\mathbf{Z}/(d)$ with period  $N$, it is said to be balanced if $|A_i-A_j|\leq 1$ for any pair of $i, j$ with $0\leq i\neq j\leq N-1$,where $$A_k=\{t|g_t=k, 0\leq t<N \}, \ k=0, 1,\ldots, d-1.$$

The autocorrelation function of a sequence $s=(s_0, s_1, \ldots, s_{N-1})$ over  $\mathbf{Z}/(d)$ with period $N$ is defined by
$$C_s(\tau)=\sum_{i=0}^{N-1}\zeta_d^{s_i-s_{i+\tau}}, \ \ \ 0\leq\tau< N,$$ where $\zeta_d$ is a complex $d$-th primitive root of unity.

 The maximal out-of-phase autocorrelation magnitude should be as small as possible and the number of the occurrences of the maximal out-of-phase autocorrelation magnitude should be minimized. A sequence with the possible minimum value of the maximal out-of-phase autocorrelation magnitude and the minimum number of occurrences of the maximal out-of-phase autocorrelation magnitude is said to have the ideal autocorrelation property.

 For a binary sequence $s$ with period $N$, it is well known that if
\begin{equation}\label{e111}
C_s(\tau)=-1 \ \ \textrm{for all $0<\tau<N$},
\end{equation} then $s$ is an ideal autocorrelation sequence.

The autocorrelation distribution of a quaternary sequence $s$ of even period $N$ with ideal autocorrelation and balance property is given by
\begin{align*}
C_s(\tau)= \left\{ \begin{array}{ll}
N, & \textrm{1 times},\\
0, & \textrm{$\frac{N}{2}-1$ times},\\
-2, & \textrm{$\frac{N}{2}$ times}.
\end{array} \right.
\end{align*} in \cite{Kim}.

By using the Legendre sequences and the Gray mapping, two classes of balanced quaternary sequences of even period $2p$ with ideal autocorrelation were constructed in \cite{Kim}. Balanced quaternary sequences of period $2(2^n-1)$ with ideal autocorrelation were constructed in \cite{Jang} by using the binary sequences of period $2^n-1$ with ideal autocorrelation and the Gray mapping.

For an odd prime $p$, let $QR$ and $QNR$ be the set of quadratic residues and quadratic non-residues in the set $\mathbf{Z}_p^{\ast}={\mathbf{Z}/(p)}\backslash \{0\}=\{1, 2, \ldots, p-1\}$, respectively. Two classes of Legendre sequences $b$ and $c$ of period $p$ are defined by
\begin{align*}
b_t= \left\{ \begin{array}{ll}
0, & \textrm{for $t=0$}\\
0, & \textrm{for $t\in QR$}\\
1, & \textrm{for $t\in QNR$}
\end{array} \right.
\end{align*}
\begin{align*}
c_t= \left\{ \begin{array}{ll}
1, & \textrm{for $t=0$}\\
0, & \textrm{for $t\in QR$}\\
1, & \textrm{for $t\in QNR$}
\end{array} \right.
\end{align*}
respectively.

The Gray mapping $\phi$ is defined by $$\phi(0,0)=0, ~\phi(0,1)=1,~ \phi(1,1)=2, ~\phi(1,0)=3.$$
According to the definition of the Gray mapping, we can get
\begin{equation}\label{e1}
\phi(a,e)=2a-a(e-1)-(a-1)e
\end{equation} where $0\leq a\leq 1$ and $0\leq e\leq 1$.

The following two classes of quaternary sequences $g^1$ and $g^2$ of even period $2p$ defined by using the Gray mapping and the Legendre sequences were shown to have ideal autocorrelation and balance property in \cite{Kim}.
\begin{definition}\label{d1}(\cite{Kim})
For an odd prime $p$ with $p\equiv 1\pmod 4$, let $s^0$ and $s^1$ be two binary sequences of the same period $2p$ defined by
\begin{align*}
s^0_{t}=\left\{ \begin{array}{ll}
b_t, & \textrm{for $t\equiv 0\bmod 2$}\\
c_t, & \textrm{for $t\equiv 1\bmod 2$}
\end{array} \right.
\end{align*}
\begin{align*}
s^1_{t}=\left\{ \begin{array}{ll}
b_{t}, & \textrm{for $t\equiv 0\bmod 2$}\\
1-c_{t}, & \textrm{for $t\equiv 1\bmod 2$}.
\end{array} \right.
\end{align*}
The quaternary sequence $g^1$ of period $2p$ is defined by $g^1_{t}=\phi(s^0_{t}, s^1_{t})$.
\end{definition}

\begin{definition}\label{d2}(\cite{Kim})
For an odd prime $p$ with $p\equiv 3\pmod 4$, let $s^2$ and $s^3$ be two binary sequences of the same period $2p$ defined by
\begin{align*}
s^2_{t}=\left\{ \begin{array}{ll}
b_{t}, & \textrm{for $0\leq t< p$}\\
b_{t}, & \textrm{for $p\leq t< 2p$}
\end{array} \right.
\end{align*}
\begin{align*}
s^3_{t}=\left\{ \begin{array}{ll}
c_{t}, & \textrm{for $t\equiv 0\bmod 2$}\\
1-c_{t}, & \textrm{for $t\equiv 1\bmod 2$}.
\end{array} \right.
\end{align*}
The quaternary sequence $g^2$ of period $2p$ is defined by $g^2_{t}=\phi(s^2_{t}, s^3_{t})$.
\end{definition}

Let $\mathbf{Z}_{2^n-1}=\mathbf{Z}/(2^n-1)=\{0, 1, 2, \ldots, 2^n-2\}$. Assume that $s$ is a binary sequence of period $2^n-1$ with ideal autocorrelation. Let $D_0$ be the characteristic set of $s$ defined by $$D_0=\{t|s_t=1, 0\leq t\leq 2^n-2\}$$ and $\overline{D}_0=\mathbf{Z}_{2^n-1}\backslash D_0$. By the Chinese remainder theorem, we have the isomorphism
$$\phi: \mathbf{Z}_{2\times(2^n-1)}\simeq \mathbf{Z}_2\times \mathbf{Z}_{2^n-1}, h\mapsto(h\bmod 2, h\bmod {2^n-1}).$$

The following class of quaternary sequences $g^3$ of even period $2(2^n-1)$ defined by using the Gray mapping and the ideal autocorrelation sequences with period $2^n-1$ were shown to have ideal autocorrelation and balance property in \cite{Jang}.

\begin{definition}(\cite{Jang})\label{d3}
Let $s$ be binary sequence of period $2^n-1$ with ideal autocorrelation and $D_0$ a characteristic set of $s$. Let $g^3$ be the quaternary sequence defined by $$g^3_t=\phi(u_t, v_t),$$ where $u$ and $v$ are the binary sequences of period $2^{n+1}-2$ defined by
\begin{align*}
u_{t}=\left\{ \begin{array}{ll}
1, & \textrm{if $t\in\{0,1\}\times D_0$}\\
0, & \textrm{if $t\in\{0,1\}\times \overline{D}_0$}
\end{array} \right.
\end{align*}
\begin{align*}
v_{t}=\left\{ \begin{array}{ll}
1, & \textrm{if $t\in\{0\}\times D_0 \ \bigcup \ \{1\}\times\overline{D}_0$}\\
0, & \textrm{if $t\in\{0\}\times \overline{D}_0\ \bigcup \ \{1\}\times D_0.$}
\end{array} \right.
\end{align*}
\end{definition}
The definition about the $4$-adic complexity of quaternary sequences with period $N$ is defined as follows.
\begin{definition}(\label{d4}\cite{K4,K1})
For a quaternary sequence $s=(s_0, s_1, \ldots, s_{N-1})$ with period $N$, let $S(4)=\sum_{i=0}^{N-1}s_i4^i$.
The 4-adic complexity $\Phi_4(s)$ is defined by $\log_4\frac{4^N-1}{\gcd(4^N-1,\ S(4))},$ where $\gcd(a, b)$ denotes the greatest common divisor of $a $ and $b$. (The exact value of the smallest length of FCSR which generates the quaternary sequence is $\lfloor \log_4{\big((4^N-1)/\gcd(4^N-1, S(4))+1\big)}\rfloor$.
\end{definition}
According to Definition \ref{d4}, determining the $4$-adic complexity of quaternary sequences is equivalent to determining $\gcd(4^N-1, S(4))$.

\section{Main result}

In this section, we study the 4-adic complexity of the quaternary sequences of period $2p$ and $2(2^n-1)$ with ideal autocorrelation in Section \ref{sec2}.

For $i\in \mathbf{Z}_p^{\ast}$, the Legendre symbol $\left(\frac{i}{p}\right)$  is  defined by
\begin{align*}
\left(\frac{i}{p}\right)
 = \left\{ \begin{array}{ll}
1, & \textrm{if $i\in QR$ }\\
-1, & \textrm{otherwise}.
\end{array} \right.
\end{align*}
The following four lemmas are useful in the sequel.
\begin{lemma}(\cite{G}, Theorem 7.3)\label{lem0}
If $s$ is a periodic binary sequence of odd period $2^n-1$ with ideal autocorrelation, then the number of nonzero bits in one period of $s$ is $2^{n-1}$.
\end{lemma}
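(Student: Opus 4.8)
The plan is to exploit the two-level autocorrelation values through a single double-counting (character-sum) identity. Write $w$ for the number of nonzero bits of $s$ in one period and pass to the $\pm 1$ representation $t\mapsto(-1)^{s_t}$. I would introduce the quantity $T=\sum_{t=0}^{2^n-2}(-1)^{s_t}$. Since a bit equal to $0$ contributes $+1$ and a bit equal to $1$ contributes $-1$, one has $T=(2^n-1-w)-w=2^n-1-2w$, so determining $w$ is equivalent to determining $T$.

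First I would rewrite the autocorrelation using $(-1)^{s_i-s_{i+\tau}}=(-1)^{s_i}(-1)^{s_{i+\tau}}$ and sum over all shifts. Interchanging the order of summation and using that, for each fixed $i$, the index $i+\tau$ runs over a full period as $\tau$ does, gives
\begin{equation*}
\sum_{\tau=0}^{2^n-2}C_s(\tau)=\sum_{i=0}^{2^n-2}(-1)^{s_i}\sum_{\tau=0}^{2^n-2}(-1)^{s_{i+\tau}}=T^2 .
\end{equation*}
On the other hand, the ideal autocorrelation condition \eqref{e111} (that is, $C_s(\tau)=-1$ for all $0<\tau<2^n-1$) together with $C_s(0)=2^n-1$ yields $\sum_{\tau}C_s(\tau)=(2^n-1)-(2^n-2)=1$. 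Equating the two evaluations gives $T^2=1$, hence $T=\pm1$, and therefore $2w=2^n-1\mp1$, i.e.\ $w=2^{n-1}$ or $w=2^{n-1}-1$.

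The step I expect to be the genuine obstacle is the final sign. The autocorrelation values alone cannot distinguish the two candidates, because complementing every bit of $s$ preserves all $C_s(\tau)$ (each factor $(-1)^{s_t}$ merely changes sign, and the two sign changes cancel in every product) while sending $w\mapsto 2^n-1-w$, interchanging $2^{n-1}$ and $2^{n-1}-1$; the flat-spectrum reformulation runs into the same ambiguity at frequency $0$. To select $w=2^{n-1}$ I would invoke the standard normalization carried by ideal autocorrelation sequences, namely that such a sequence has one more nonzero entry than zero entry (the balance/run structure of the $m$-sequence-type sequences), which forces $T=-1$ and hence $w=2^{n-1}$. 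This is precisely the convention under which the characteristic set $D_0$ is taken in the subsequent construction, so the choice is consistent with how the lemma is later applied.
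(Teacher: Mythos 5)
Your double-counting identity is correct and is the standard first step: summing $C_s(\tau)=\sum_i(-1)^{s_i}(-1)^{s_{i+\tau}}$ over a full period of shifts factors the sum as $T^2$ with $T=\sum_t(-1)^{s_t}=2^n-1-2w$, while the ideal autocorrelation values give $\sum_\tau C_s(\tau)=(2^n-1)-(2^n-2)=1$; hence $T=\pm1$ and $w\in\{2^{n-1},2^{n-1}-1\}$. (Note that the paper supplies no proof of this lemma at all---it is quoted from Golomb--Gong, Theorem 7.3, and the sentence promising a ``simple proof'' refers to Lemma~\ref{lem1}---so there is no in-paper argument to compare against.)

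The genuine gap is your final step, and you have in fact diagnosed it yourself before committing it. To discard $w=2^{n-1}-1$ you ``invoke the standard normalization \ldots that such a sequence has one more nonzero entry than zero entry.'' Since the period is $2^n-1$, having one more nonzero entry than zero entries is literally the equation $w-(2^n-1-w)=1$, i.e.\ $w=2^{n-1}$: this is the conclusion, not an auxiliary fact, so the argument is circular. Your own observation that complementing every bit preserves all $C_s(\tau)$ while swapping $w$ and $2^n-1-w$ shows that no argument can recover the sign from condition (\ref{e111}) alone---the complement of an $m$-sequence satisfies (\ref{e111}) and has exactly $2^{n-1}-1$ ones---so the missing sign is an additional hypothesis (the balance normalization built into the constructions of the cited source), which must be imported into the statement rather than re-derived from the autocorrelation. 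A clean repair is either to add that hypothesis explicitly, or to observe that the two-valued conclusion $w\in\{2^{n-1},2^{n-1}-1\}$ is all that Theorem~\ref{t3} actually uses: both values give $w+2^n-1\equiv 3\cdot 2^{n-1}\mp 1\not\equiv 0\pmod 3$, which is exactly the fact $3\nmid g^3(4)$ needed there.
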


The proof of the lemma is similar to that of Lemma 2(1) in \cite{LZ}. For the
completeness of the paper, we give a simple proof.

\begin{lemma}\label{lem1}
Let $p$ be an odd prime. Then $$\left(\sum_{i=1}^{p-1}\left(\frac{i}{p}\right)4^{i}\right)^2\equiv-\left(\frac{-1}{p}\right)\frac{4^p-1}{3}+\left(\frac{-1}{p}\right)p\pmod {4^{p}-1}.$$
\end{lemma}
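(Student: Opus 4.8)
The goal is to compute the square of the Gauss-sum-like quantity $T = \sum_{i=1}^{p-1}\left(\frac{i}{p}\right)4^i$ modulo $4^p - 1$. The plan is to expand the square as a double sum and reorganize it by the difference of exponents, exploiting the multiplicative structure of the Legendre symbol together with the fact that $4^p \equiv 1 \pmod{4^p-1}$.

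First I would write $T^2 = \sum_{i=1}^{p-1}\sum_{j=1}^{p-1}\left(\frac{i}{p}\right)\left(\frac{j}{p}\right)4^{i+j}$. The natural substitution is to fix $i$ and write $j = ik \bmod p$ for $k$ ranging over $\mathbf{Z}_p^{\ast}$; since the Legendre symbol is multiplicative, $\left(\frac{i}{p}\right)\left(\frac{j}{p}\right) = \left(\frac{i^2 k}{p}\right) = \left(\frac{k}{p}\right)$. This collects the sum into $T^2 = \sum_{k=1}^{p-1}\left(\frac{k}{p}\right)\sum_{i=1}^{p-1}4^{i+ik}$, where the inner exponents $i(1+k)$ should be reduced modulo $p$ in the exponent because $4^p \equiv 1$. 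The key step will be to separate the term $k = p-1$ (where $1+k \equiv 0$, so every inner term is $4^0 = 1$) from the terms with $k \neq p-1$.

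For the singular term $k \equiv -1 \pmod p$: here $\left(\frac{-1}{p}\right)$ times $\sum_{i=1}^{p-1} 1 = (p-1)\left(\frac{-1}{p}\right)$, which I expect will combine with the main computation to produce the $+\left(\frac{-1}{p}\right)p$ correction after absorbing a stray constant. For the regular terms $k \not\equiv -1$, the inner sum $\sum_{i=1}^{p-1} 4^{i(1+k)}$ runs, as $i$ ranges over $\mathbf{Z}_p^{\ast}$ and $1+k$ is a unit, over $\sum_{m=1}^{p-1} 4^m = \frac{4^p-4}{3} \equiv \frac{4^p-1}{3}-1 \pmod{4^p-1}$, which is independent of $k$. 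Factoring this out leaves $\sum_{k\neq -1}\left(\frac{k}{p}\right)$; since $\sum_{k=1}^{p-1}\left(\frac{k}{p}\right)=0$, this partial sum equals $-\left(\frac{-1}{p}\right)$. Putting the regular and singular contributions together and carefully bookkeeping the constant terms (the $-1$ coming from $\frac{4^p-4}{3}$ versus $\frac{4^p-1}{3}$, and the $(p-1)$ from the singular term) should yield exactly $-\left(\frac{-1}{p}\right)\frac{4^p-1}{3}+\left(\frac{-1}{p}\right)p$.

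The main obstacle will be the exponent bookkeeping: because the identity is modulo $4^p-1$ rather than an equality of integers, I must track that reducing $i(1+k) \bmod p$ in the exponent is legitimate (it is, precisely because $4^p\equiv 1$), and I must correctly account for the constant shifts so that the final expression carries the exact coefficients $-\left(\frac{-1}{p}\right)$ on $\frac{4^p-1}{3}$ and $+\left(\frac{-1}{p}\right)$ on $p$. I expect the delicate part is not any single step but the consistent handling of the "$-1$" and "$p-1$" constants across the regular and singular cases so they consolidate into the clean $+\left(\frac{-1}{p}\right)p$ term rather than leaving residual constants.
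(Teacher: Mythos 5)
Your proposal is correct and follows essentially the same route as the paper's proof: expand the square, substitute $j=ik$ to exploit multiplicativity of the Legendre symbol, split off the $k\equiv -1$ term, and use $\sum_{k\ne -1}\left(\frac{k}{p}\right)=-\left(\frac{-1}{p}\right)$ together with $\sum_{i=1}^{p-1}4^{i(k+1)}\equiv \frac{4^p-1}{3}-1\pmod{4^p-1}$. The constant bookkeeping you flag as the delicate point does close up exactly as you predict, yielding $-\left(\frac{-1}{p}\right)\frac{4^p-1}{3}+\left(\frac{-1}{p}\right)p$.
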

\begin{proof}
Since
\begin{align*}
\left(\sum_{i=1}^{p-1}\left(\frac{i}{p}\right)4^{i}\right)^2
&=\sum_{i=1}^{p-1}\left(\frac{i}{p}\right)4^{i}\sum_{j=1}^{p-1}\left(\frac{j}{p}\right)4^{j}\\
&=\sum_{i,j=1}^{p-1}\left(\frac{ij}{p}\right)4^{i+j}\ \ (\textrm{let}~ j=ik)\\
&=\sum_{i,k=1}^{p-1}\left(\frac{k}{p}\right)4^{i(k+1)}\\
&=\sum_{k=1}^{p-2}\left(\frac{k}{p}\right)\sum_{i=1}^{p-1}4^{i(k+1)}+\sum_{i=1}^{p-1}4^{ip}\left(\frac{p-1}{p}\right).
\end{align*}
Then from $\sum_{k=1}^{p-2}\left(\frac{k}{p}\right)=-\left(\frac{-1}{p}\right)$ and $$\sum_{i=1}^{p-1}4^{i(k+1)}\equiv\sum_{i=1}^{p-1}4^{i}\pmod{4^p-1}~(1\leq k\leq p-2),$$ we get
\begin{align*}
\left(\sum_{i=1}^{p-1}\left(\frac{i}{p}\right)4^{i}\right)^2
&\equiv -\left(\frac{-1}{p}\right)\bigg(\frac{4^p-1}{3}-1\bigg)+(p-1)\left(\frac{-1}{p}\right)\\
&\equiv -\left(\frac{-1}{p}\right)\frac{4^p-1}{3}+\left(\frac{-1}{p}\right)p\pmod{4^p-1}.
\end{align*}
\end{proof}

\begin{lemma}\label{lem3}
For a prime $p$, if $25|(4^p+1)$, then we have $p=5$.
\end{lemma}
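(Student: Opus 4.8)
The plan is to reduce the divisibility hypothesis to a statement about the multiplicative order of $4$ modulo $25$. First I would rewrite $25\mid(4^p+1)$ as the congruence $4^p\equiv-1\pmod{25}$, and then square it to obtain $4^{2p}\equiv1\pmod{25}$. Since $\gcd(4,25)=1$, this shows that the multiplicative order of $4$ modulo $25$ divides $2p$.

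The next step is to pin down that order by direct computation. Listing the successive powers $4^1\equiv4$, $4^2\equiv16$, $4^3\equiv14$, $4^4\equiv6$, $4^5\equiv24\equiv-1\pmod{25}$, one reads off that $4^5\equiv-1$ and hence that the order of $4$ modulo $25$ is exactly $10$. (Equivalently, the order must divide $\varphi(25)=20$, and the short table above eliminates every proper divisor, so $10$ is forced.)

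Combining the two steps gives $10\mid 2p$, which forces $5\mid p$. Because $p$ is prime, the only possibility is $p=5$. To confirm sharpness I would check that $p=5$ does satisfy the hypothesis, since $4^5+1=1025=25\cdot41$; thus the implication is an exact characterization of the admissible prime.

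I do not anticipate a genuine obstacle here: the whole argument rests on the elementary computation of the order of $4$ modulo $25$, and once $4^5\equiv-1\pmod{25}$ is recorded the conclusion is immediate. The only point demanding a little care is verifying that the order is \emph{exactly} $10$ rather than a proper divisor of it, for which the explicit table of the first five powers is enough.
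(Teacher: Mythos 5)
Your proof is correct and follows essentially the same route as the paper: both arguments come down to computing powers of $4$ modulo $25$ (the paper notes $4^{10}\equiv 1$ and that $i=5$ is the only exponent in $\{1,\dots,9\}$ with $4^i\equiv-1$, while you phrase it as the order of $4$ being exactly $10$ and dividing $2p$), and both conclude $5\mid p$, hence $p=5$.
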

\begin{proof}
Since $4^{10}\equiv 1\pmod{25}$ and $4^i\equiv -1\pmod {25}$ has only one solution $i=5$ in the set $\{i|1\leq i\leq 9\}$, then from $4^p\equiv -1\pmod{25}$, we get $p=5+10k\ (k\in \mathbf{Z})$ which implies $5|p$. Hence we get $p=5$.
\end{proof}
\begin{lemma}\label{lem2}
For an odd prime $p$, we have
$$\sum_{t=1}^{p-1}\left(\frac{2t}{p}\right)4^{2t}+\sum_{\substack{t=0\\t\neq\frac{p-1}{2}}}^{p-1}\left(\frac{2t+1}{p}\right)4^{2t+1}
\equiv
 \left\{ \begin{array}{ll}
2\sum\limits_{t=1}^{p-1}\left(\frac{t}{p}\right)4^{t}\pmod{4^p-1}\\
0\pmod{4^p+1}
\end{array} \right.$$
\end{lemma}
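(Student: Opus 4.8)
The plan is to recognize the entire left-hand side as a single sum $\sum_{m=1}^{2p-1}\left(\frac{m}{p}\right)4^{m}$ and then factor out $1+4^{p}$, after which both congruences follow immediately by reducing $4^{p}$ to $\pm 1$.

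First I would merge the two sums. The first sum runs over the even exponents $m=2t\in\{2,4,\dots,2p-2\}$ and the second over the odd exponents $m=2t+1\in\{1,3,\dots,2p-1\}$ with the single value $m=p$ omitted. The omitted index $t=\frac{p-1}{2}$ is exactly the one for which the coefficient is $\left(\frac{2t+1}{p}\right)=\left(\frac{p}{p}\right)=0$, so adjoining that zero term changes nothing. Hence the left-hand side equals $\sum_{m=1}^{2p-1}\left(\frac{m}{p}\right)4^{m}$, where $\left(\frac{m}{p}\right)$ depends only on $m\bmod p$.

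Next I would split this sum at $m=p$. The lower block $m=1,\dots,p-1$ is precisely $T:=\sum_{t=1}^{p-1}\left(\frac{t}{p}\right)4^{t}$, the $m=p$ term vanishes, and in the upper block I substitute $m=p+r$ with $r=1,\dots,p-1$. Periodicity of the Legendre symbol gives $\left(\frac{p+r}{p}\right)=\left(\frac{r}{p}\right)$, so the upper block equals $4^{p}\sum_{r=1}^{p-1}\left(\frac{r}{p}\right)4^{r}=4^{p}T$. Therefore the whole left-hand side collapses to $(1+4^{p})T$.

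Finally the two congruences drop out: modulo $4^{p}-1$ we have $4^{p}\equiv 1$, so $(1+4^{p})T\equiv 2T=2\sum_{t=1}^{p-1}\left(\frac{t}{p}\right)4^{t}$; modulo $4^{p}+1$ we have $4^{p}\equiv -1$, so $(1+4^{p})T\equiv 0$. The only genuinely delicate point is the bookkeeping in the first step, namely verifying that the deliberately excluded index is exactly the one annihilated by the vanishing Legendre symbol, and that the even and odd exponents recombine into the complete range $1,\dots,2p-1$ with no gaps or overlaps. Once that recombination is confirmed, the remainder is a one-line factorization.
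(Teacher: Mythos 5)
Your proof is correct, and while it rests on the same two underlying facts as the paper's (periodicity of the Legendre symbol modulo $p$ and the reduction $4^{p}\equiv\pm 1$), the decomposition is genuinely different. You regroup the terms by the \emph{size} of the exponent, into the blocks $1\le m\le p-1$ and $p+1\le m\le 2p-1$, and obtain the exact integer identity that the left-hand side equals $(1+4^{p})T$ with $T=\sum_{t=1}^{p-1}\left(\frac{t}{p}\right)4^{t}$, from which both congruences fall out at once. The paper instead keeps the two \emph{parity} classes separate: modulo $4^{p}-1$ it reindexes the even-exponent and odd-exponent sums individually (each is $\equiv T$ because $2t\bmod p$ and $(2t+1)\bmod p$ each permute $\{1,\dots,p-1\}$), and modulo $4^{p}+1$ it splits each sum into halves and cancels terms in pairs, matching $4^{2t}$ against $4^{2t+p}$. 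Your single factorization buys uniformity and makes the mechanism transparent; the one point you must (and do) verify is the bookkeeping that the excluded index $t=\frac{p-1}{2}$ is exactly where $2t+1=p$, so that adjoining it with the convention $\left(\frac{p}{p}\right)=0$ costs nothing and the exponents recombine into all of $\{1,\dots,2p-1\}$. (Note the paper defines the Legendre symbol only on $\mathbf{Z}_p^{\ast}$, so its two-case argument sidesteps that convention at the cost of doing two separate computations.) Both arguments are complete and correct.
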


\begin{proof}
From
$$\sum_{t=1}^{p-1}\left(\frac{2t}{p}\right)4^{2t}\equiv \sum_{t=1}^{p-1}\left(\frac{t}{p}\right)4^{t} \equiv\sum_{\substack{t=0\\t\neq\frac{p-1}{2}}}^{p-1}\left(\frac{2t+1}{p}\right)4^{2t+1}\pmod{4^p-1}$$
we get
$$\sum_{t=1}^{p-1}\left(\frac{2t}{p}\right)4^{2t}+\sum_{\substack{t=0\\t\neq\frac{p-1}{2}}}^{p-1}\left(\frac{2t+1}{p}\right)4^{2t+1}\equiv
2\sum_{t=1}^{p-1}\left(\frac{t}{p}\right)4^{t}\pmod{4^p-1}.$$
Since
\begin{align*}
\sum_{t=1}^{\frac{p-1}{2}}\left(\frac{2t}{p}\right)4^{2t}\equiv-\sum_{t=\frac{p-3}{2}+2}^{p-1}\left(\frac{2t+1}{p}\right)4^{2t+1}\pmod{4^p+1}
\end{align*}
\begin{align*}
\sum_{t=\frac{p-1}{2}+1}^{p-1}\left(\frac{2t}{p}\right)4^{2t}\equiv-\sum_{t=0}^{\frac{p-3}{2}}\left(\frac{2t+1}{p}\right)4^{2t+1}\pmod{4^p+1}
\end{align*}
then the rest result follows from
\begin{align*}
\sum_{t=1}^{p-1}\left(\frac{2t}{p}\right)4^{2t}&= \sum_{t=1}^{\frac{p-1}{2}}\left(\frac{2t}{p}\right)4^{2t}+ \sum_{t=\frac{p-1}{2}+1}^{p-1}\left(\frac{2t}{p}\right)4^{2t}
\end{align*}
and

 $$\sum_{\substack{t=0\\t\neq\frac{p-1}{2}}}^{p-1}\left(\frac{2t+1}{p}\right)4^{2t+1}=\sum_{t=0}^{\frac{p-3}{2}}\left(\frac{2t+1}{p}\right)4^{2t+1}
+\sum_{t=\frac{p-3}{2}+2}^{p-1}\left(\frac{2t+1}{p}\right)4^{2t+1}. $$

\end{proof}
Now we study the 4-adic complexity of the quaternary sequence $g^1$ in Definition \ref{d1}.
\begin{theorem}\label{t1}
For the quaternary sequence $g^1$ in Definition \ref{d1}, we have
\begin{align*}
\Phi_4(g^1)
 = \left\{ \begin{array}{ll}
\log_4\frac{4^{2p}-1}{15}, & \textrm{if $5|(p+2)$ }\\
\log_4\frac{4^{2p}-1}{3}, & \textrm{else}.
\end{array} \right.
\end{align*}

\end{theorem}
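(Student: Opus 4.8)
The plan is to reduce the problem, via Definition~\ref{d4}, to evaluating $\gcd(4^{2p}-1,\,S(4))$ with $S(4)=\sum_{t=0}^{2p-1}g^1_t4^t$, and then to split this gcd through the coprime factorization $4^{2p}-1=(4^p-1)(4^p+1)$ (both factors are odd), so that $\gcd(4^{2p}-1,S(4))=\gcd(4^p-1,S(4))\cdot\gcd(4^p+1,S(4))$. First I would make the Gray map explicit: expanding \eqref{e1} gives $\phi(a,e)=3a-2ae+e$, so because $b_t,c_t\in\{0,1\}$ one obtains $g^1_t=2b_t$ for even $t$ and $g^1_t=2c_t+1$ for odd $t$. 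Writing $2b_t=1-\left(\frac{t}{p}\right)$ for $t\not\equiv0\pmod p$ and recording the two exceptional positions $t=0$ (where $g^1_0=0$) and $t=p$ (where $g^1_p=3$), I would assemble $S(4)$ and collect the terms into two geometric progressions over the even and odd indices together with two Legendre sums. Since the even indices and the odd indices each run once through all residues modulo $p$, these Legendre sums are exactly the two sums appearing in Lemma~\ref{lem2}; denoting their sum by $L$ and setting $E=\frac{4^{2p}-1}{15}$, the whole expression should collapse to the identity $S(4)=4^p+9E-1-L$.

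Next I would evaluate the two local gcds. Modulo $4^p-1$ we have $4^p\equiv1$ and, since $9E=3(4^p-1)\cdot\frac{4^p+1}{5}$, also $9E\equiv0$; combined with the first line of Lemma~\ref{lem2}, namely $L\equiv 2G$ where $G=\sum_{t=1}^{p-1}\left(\frac{t}{p}\right)4^t$, this gives $S(4)\equiv-2G\pmod{4^p-1}$, hence $\gcd(4^p-1,S(4))=\gcd(4^p-1,G)$ because $4^p-1$ is odd. To pin down this gcd I would use Lemma~\ref{lem1}: since $p\equiv1\pmod4$ we have $\left(\frac{-1}{p}\right)=1$, so $G^2\equiv p-\frac{4^p-1}{3}\pmod{4^p-1}$. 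If a prime $q\neq3$ divided $\gcd(4^p-1,G)$ then $q\mid\frac{4^p-1}{3}$, and reducing the congruence modulo $q$ would force $q\mid p$, i.e.\ $q=p$; but $4^p\equiv4\pmod p$ gives $4^p-1\equiv3\pmod p$, so $p\nmid4^p-1$, a contradiction. Thus only $3$ survives, and since $3\mid G$ (the Legendre symbols sum to $0$ modulo $3$ as $4\equiv1$) while $v_3(4^p-1)=1$ by lifting the exponent, I conclude $\gcd(4^p-1,S(4))=3$.

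Modulo $4^p+1$ I would use $4^p\equiv-1$, the second line of Lemma~\ref{lem2} ($L\equiv0$), and the reduction $9E\equiv-\frac{4^p+1}{5}$, which together give $5\,S(4)\equiv-10\pmod{4^p+1}$. Consequently $\gcd(4^p+1,S(4))$ divides $10$ and, being odd, divides $5$, so it equals $1$ or $5$ and equals $5$ precisely when $5\mid S(4)$. A direct reduction modulo $5$, using $4\equiv-1$, gives $S(4)\equiv\sum_{t=0}^{2p-1}g^1_t(-1)^t=(p-1)-(2p+1)\equiv-(p+2)\pmod5$, so $5\mid S(4)$ if and only if $5\mid(p+2)$; Lemma~\ref{lem3} guarantees that (for $p\neq5$, which holds in this case) the prime $5$ can enter only to the first power, confirming the value $5$ rather than a higher power. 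Multiplying the two local contributions, $\gcd(4^{2p}-1,S(4))$ equals $15$ when $5\mid(p+2)$ and $3$ otherwise, which gives the asserted formula for $\Phi_4(g^1)$.

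The step I expect to be the main obstacle is the bookkeeping that condenses the raw sum $S(4)$ into the identity $S(4)=4^p+9E-1-L$ — in particular matching the two Legendre sums exactly to Lemma~\ref{lem2} and correctly reducing the non-transparent quantity $9E$ modulo each of $4^p-1$ and $4^p+1$. Once that identity is secured, the two gcd computations are essentially mechanical, the only genuinely delicate point being the prime-elimination argument modulo $4^p-1$, which relies entirely on the quadratic relation of Lemma~\ref{lem1}.
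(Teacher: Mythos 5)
Your proposal is correct and follows essentially the same route as the paper: the same splitting of $\gcd(4^{2p}-1,S(4))$ across the coprime factors $4^p\pm1$, the same reduction of $S(4)$ to $9E+4^p-1-L$ via the Gray-map identity, and the same use of Lemmas \ref{lem1}--\ref{lem2} for the prime-elimination and the mod-$5$ criterion. Your handling of the $4^p+1$ side via $5S(4)\equiv-10\pmod{4^p+1}$ is a slight streamlining that makes Lemma \ref{lem3} unnecessary, and your LTE argument for $v_3(4^p-1)=1$ replaces the paper's order argument, but these are cosmetic differences.
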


\begin{proof}
(i) Firstly, we prove $$\gcd(g^1(4), 4^{p}-1)=3. $$ Let the symbols be the same as before. Then we get
\begin{align}
g^1(4)&=\sum_{t=0}^{2p-1}\phi(s_t^0, s_t^1)4^t\notag\\& =\sum_{t=0}^{p-1}\phi(b_{2t}, b_{2t})4^{2t}
+\sum_{t=0}^{p-1}\phi(c_{2t+1}, 1-c_{2t+1})4^{2t+1}\notag
\\ &=\sum_{t=0}^{p-1}[2b_{2t}-b_{2t}(b_{2t}-1)-(b_{2t}-1)b_{2t}]4^{2t}+\sum_{t=0}^{p-1}[2c_{2t+1}-c_{2t+1}(-c_{2t+1})\notag
\\ & \ \ -(c_{2t+1}-1)(1-c_{2t+1})]4^{2t+1}\ \ \textrm{(by (\ref{e1}))}\notag
\\ &=\sum_{t=0}^{p-1}[2b_{2t}-{(b_{2t})}^2+b_{2t}-{(b_{2t})}^2+b_{2t}]4^{2t} +\sum_{t=0}^{p-1}[2c_{2t+1}+({c_{2t+1}})^2+1-2c_{2t+1}+({c_{2t+1}})^2]4^{2t+1}\notag
\\ &=\sum_{t=0}^{p-1}2b_{2t}4^{2t}+\sum_{t=0}^{p-1}(2c_{2t+1}+1)4^{2t+1}\ \ \textrm{(since $a^2=a ~(0\leq a\leq 1)$) }\notag
\\ &=2b_{0}4^{0}+\sum_{t=1}^{p-1}2b_{2t}4^{2t}+(2c_p+1)4^p+\sum_{\substack{t=0\\t\neq\frac{p-1}{2}}}^{p-1}(2c_{2t+1}+1)4^{2t+1}\notag
\\ &=\sum_{t=1}^{p-1}2b_{2t}4^{2t}+3\cdot 4^p+\sum_{\substack{t=0\\t\neq\frac{p-1}{2}}}^{p-1}(2c_{2t+1}+1)4^{2t+1}\ \ \textrm{(since $b_0=0$, $c_0=c_p=1$) }\notag
\\ &=\sum_{\substack{t=0\\t\neq\frac{p-1}{2}}}^{p-1}4^{2t+1}+3\cdot 4^p+\sum_{t=1}^{p-1}2b_{2t}4^{2t}+\sum_{\substack{t=0\\t\neq\frac{p-1}{2}}}^{p-1}2c_{2t+1}4^{2t+1}\notag
\\ &= 4\sum_{t=0}^{p-1}4^{2t}-4^p+3\cdot4^p
  +2\sum_{t=1}^{p-1}\frac{1-\left(\frac{2t}{p}\right)}{2}4^{2t} +2\sum_{\substack{t=0\\t\neq\frac{p-1}{2}}}^{p-1}\frac{1-\left(\frac{2t+1}{p}\right)}{2}4^{2t+1}\notag\\
  &\equiv\left\{ \begin{array}{ll}
9\sum\limits_{t=0}^{p-1}4^{2t}-2\sum\limits_{t=1}^{p-1}\left(\frac{t}{p}\right)4^{t}\pmod{4^p-1}\\
9\sum\limits_{t=0}^{p-1}4^{2t}-2\pmod{4^p+1}.
\end{array} \right.\ \ \  \ \textrm{(by Lemma \ref{lem2})}
\end{align}\label{e3}

 Since $3|(4^p-1$), then from (3) we know
\begin{align*}
 g^1(4)\equiv\sum_{t=1}^{p-1}\left(\frac{t}{p}\right)4^t\equiv 0 \pmod 3\ \textrm{\bigg(since $\sum_{t=1}^{p-1}\left(\frac{t}{p}\right)=0$\bigg).}
\end{align*}
It then follows that $3|\gcd(g^1(4), 4^p-1)$. If $9|(4^p-1)$, then from $4^3\equiv 1\pmod 9$ and $4^p\equiv 1\pmod 9$, we get
$p=3$ which contradicts with $p\equiv 1\pmod 4$. Therefore $9\nmid \gcd(g^1(4), 4^p-1)$.

Assume that  $d_1$ is a prime divisor of $\gcd(g^1(4), 4^p-1)$ such that $d_1\neq 3$.  By (3) we get
\begin{align*}
 g^1(4)\equiv-2\sum_{t=1}^{p-1}\left(\frac{t}{p}\right)4^t\pmod{d_1}.
\end{align*}
Then we have $d_1\big|\left(\sum\limits_{t=1}^{p-1}\left(\frac{t}{p}\right)4^t\right)^2$. Combining with Lemma \ref{lem1}, we have $d_1\mid p$ which implies $d_1=p$. Hence, we have $4^p\equiv 1\pmod p$. By Fermat's little Theorem, we get $4^{p-1}\equiv 1\pmod p$. Then we have $p|(p-1)$ which is a contradiction. Hence, we know $d_1=1$.

Therefore we get
\begin{align}
\gcd(g^1(4), 4^{p}-1)=3. \label{e4}
\end{align}
(ii) Next, we prove
\begin{align*}
\gcd(g^1(4), 4^{p}+1)
 = \left\{ \begin{array}{ll}
5, & \textrm{if $5|(p+2)$ }\\
1, & \textrm{else}.
\end{array} \right.
\end{align*}
By (3) we have \begin{align*}
 g^1(4)\equiv-p-2\pmod 5.
\end{align*}
Then we get $5|\gcd(g^1(4), 4^p+1)$ only when $5|(p+2)$.

Assume that $5|(p+2)$ and $25|(4^p+1)$, then by Lemma \ref{lem3} we get $p=5$ which contradicts with $5|(p+2)$. It then follows that $25\nmid\gcd(g^1(4), 4^p+1)$.

Let $d_2$ be a divisor of $\gcd(g^1(4), 4^p+1)$ such that $5\nmid d_2$. Then from (3) we have
$g^1(4)\equiv -2\pmod{d_2}.$
Thus $d_2\mid2$ which implies $d_2=1.$
Therefore
\begin{align}
\gcd(g^1(4), 4^{p}+1)
 = \left\{ \begin{array}{ll}
5, & \textrm{if $5|(p+2)$ }\\
1, & \textrm{else}.
\end{array} \right.
\end{align}
Combining with (4-5) and the definition of the 4-adic complexity, the result is proven.
\end{proof}

The 4-adic complexity of the sequence $g^2$ in Definition 2 is given by the following theorem.

\begin{theorem}\label{t2}
For the quaternary sequence $g^2$ in Definition 2, we have
\begin{align*}
\Phi_4(g^2)
 = \left\{ \begin{array}{ll}
\log_4\frac{4^{2p}-1}{5}, & \textrm{if $5|(p-2)$ }\\
\log_4(4^{2p}-1), & \textrm{else}.
\end{array} \right.
\end{align*}
\end{theorem}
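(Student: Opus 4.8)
The plan is to follow the architecture of the proof of Theorem~\ref{t1}. Writing $4^{2p}-1=(4^p-1)(4^p+1)$ and noting $\gcd(4^p-1,4^p+1)\mid 2$ with $4^p-1$ odd, the two factors are coprime, so by Definition~\ref{d4} it suffices to compute $\gcd(g^2(4),4^p-1)$ and $\gcd(g^2(4),4^p+1)$ separately and multiply. I expect to find $\gcd(g^2(4),4^p-1)=1$ and $\gcd(g^2(4),4^p+1)$ equal to $5$ when $5\mid(p-2)$ and $1$ otherwise, which yields the claimed value of $\Phi_4(g^2)$.

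First I would expand $g^2(4)=\sum_{t=0}^{2p-1}\phi(s^2_t,s^3_t)4^t$ by splitting into even and odd indices, exactly as for $g^1$. On even positions $s^2_{2k}=b_{2k}$ and $s^3_{2k}=c_{2k}$ agree off $t\equiv0\pmod p$, so by (\ref{e1}) one has $\phi(b_{2k},c_{2k})=2b_{2k}$; on odd positions $\phi(b_{2k+1},1-c_{2k+1})=1+2c_{2k+1}$. The two boundary terms need separate care: at $t=0$, $\phi(b_0,c_0)=\phi(0,1)=1$, and at $t=p$, $\phi(b_p,1-c_p)=\phi(0,0)=0$, using $b_0=0$, $c_0=c_p=1$. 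Substituting $2b_{2k}=1-\left(\frac{2k}{p}\right)$ and $1+2c_{2k+1}=2-\left(\frac{2k+1}{p}\right)$, collecting the constant terms into $9\sum_{t=0}^{p-1}4^{2t}-2\cdot4^p$, and applying Lemma~\ref{lem2} to the character sums, I expect
\[
g^2(4)\equiv\left\{\begin{array}{ll}
9\sum_{t=0}^{p-1}4^{2t}-2\cdot4^p-2\sum_{t=1}^{p-1}\left(\frac{t}{p}\right)4^t & \pmod{4^p-1},\\
9\sum_{t=0}^{p-1}4^{2t}-2\cdot4^p & \pmod{4^p+1}.
\end{array}\right.
\]
Since $\sum_{t=0}^{p-1}4^{2t}\equiv\frac{4^p-1}{3}\pmod{4^p-1}$, the first line collapses to $g^2(4)\equiv-2-2G\pmod{4^p-1}$, where $G=\sum_{t=1}^{p-1}\left(\frac{t}{p}\right)4^t$.

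To show $\gcd(g^2(4),4^p-1)=1$, I first dispose of $d=3$: reducing mod $3$ and using $\sum_{t=1}^{p-1}\left(\frac{t}{p}\right)=0$ gives $g^2(4)\equiv-2\equiv1\pmod3$, so $3\nmid g^2(4)$. Now take a prime $d\mid 4^p-1$, $d\neq3$, and suppose $d\mid g^2(4)$. Then $G\equiv-1\pmod d$ (as $d$ is odd), so $G^2\equiv1$; Lemma~\ref{lem1} with $\left(\frac{-1}{p}\right)=-1$ (valid since $p\equiv3\pmod4$) gives $G^2\equiv\frac{4^p-1}{3}-p$, and because $d\neq3$ divides $4^p-1$ we have $\frac{4^p-1}{3}\equiv0\pmod d$, hence $1\equiv-p\pmod d$, i.e.\ $d\mid(p+1)$. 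The step I expect to be the crux, and the place where the argument genuinely departs from Theorem~\ref{t1} (where one lands on $d\mid p$ and finishes by Fermat), is closing this case: since $d\mid 4^p-1$ and $d\neq3$, the multiplicative order of $4$ modulo $d$ divides the prime $p$ and is not $1$, hence equals $p$, so $p\mid(d-1)$ and $d\geq p+1$. Combined with $d\mid(p+1)$ this forces $d=p+1$, which is even and therefore cannot divide the odd number $4^p-1$ --- a contradiction. Thus no such $d$ exists.

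For $\gcd(g^2(4),4^p+1)$ I would use that $5\mid(4^p+1)$ for odd $p$. Reducing the second congruence modulo $5$ via $16\equiv1$ and $4^p\equiv-1\pmod5$ gives $g^2(4)\equiv9p+2\equiv4p+2\pmod5$, which vanishes exactly when $p\equiv2\pmod5$; hence $5\mid\gcd(g^2(4),4^p+1)$ iff $5\mid(p-2)$. To rule out $25$, note that if $5\mid(p-2)$ then $p\neq5$, so Lemma~\ref{lem3} gives $25\nmid(4^p+1)$. Finally, for any prime $d\mid 4^p+1$ with $d\neq5$ (and automatically $d\neq3$, since $4^p+1\equiv2\pmod3$), from $15\sum_{t=0}^{p-1}4^{2t}=16^p-1\equiv0\pmod d$ and $\gcd(d,15)=1$ we get $\sum_{t=0}^{p-1}4^{2t}\equiv0\pmod d$, so $g^2(4)\equiv-2\cdot4^p\equiv2\pmod d$, forcing $d\mid2$, impossible for odd $d$. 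Hence $\gcd(g^2(4),4^p+1)$ is $5$ if $5\mid(p-2)$ and $1$ otherwise. Multiplying the two gcd's and applying Definition~\ref{d4} gives the stated formula for $\Phi_4(g^2)$.
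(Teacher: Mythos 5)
Your proposal is correct and follows essentially the same route as the paper's proof: the same splitting of $4^{2p}-1$ into the coprime factors $4^p\pm1$, the same Gray-map expansion leading to $g^2(4)\equiv 9\sum_{t}4^{2t}-2G-2\pmod{4^p-1}$ and $\equiv 9\sum_t 4^{2t}+2\pmod{4^p+1}$, and the same use of Lemmas \ref{lem1}, \ref{lem3}, \ref{lem2} together with the order-of-$4$ argument. Your explicit closing of the case $d\mid(p+1)$ (forcing $d=p+1$, which is even and so cannot divide the odd number $4^p-1$) spells out a step the paper leaves implicit, but it is the same argument.
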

\begin{proof}
(i) Firstly, we determine the exact value of
$\gcd(g^2(4), 4^p-1)$.

 Assume that the symbols are the same as before. Then we have
\begin{align}
g^2(4)\notag &=\sum_{t=0}^{2p-1}\phi(s_t^2, s_t^3)4^t\notag\\& =\sum_{t=0}^{p-1}\phi(b_{2t}, c_{2t})4^{2t}+\sum_{t=0}^{p-1}\phi(b_{2t+1}, 1-c_{2t+1})4^{2t+1}\notag
\\ &=\sum_{t=0}^{p-1}[2b_{2t}-b_{2t}(c_{2t}-1)-(b_{2t}-1)c_{2t}]4^{2t}+\sum_{t=0}^{p-1}[2b_{2t+1}+b_{2t+1}c_{2t+1}\notag
\\ & \ \ -(b_{2t+1}-1)(1-c_{2t+1})]4^{2t+1}\ \ \textrm{(by (\ref{e1}))}\notag
\\ &=(2\times 0-0\times(1-1)-(0-1)\times1)\times 4^0+\sum_{t=1}^{p-1}[2b_{2t}-b_{2t}c_{2t}+b_{2t}-b_{2t}c_{2t}+c_{2t}]4^{2t}\notag
\\ & \ \ +[2b_p+b_pc_p-(b_p-1)(1-c_p)]4^p+\sum_{\substack{t=0\\t\neq\frac{p-1}{2}}}^{p-1}[2b_{2t+1}+b_{2t+1}c_{2t+1}+{(b_{2t+1})}^2+1-2b_{2t+1}]4^{2t+1}\notag
\\ &=1+\sum_{t=1}^{p-1}2b_{2t}4^{2t}+\sum_{\substack{t=0\\t\neq\frac{p-1}{2}}}^{p-1}(2b_{2t+1}+1)4^{2t+1} \ \ \textrm{(since $a^2=a ~(0\leq a\leq 1)$) }\notag
\\&=1+\sum_{t=1}^{p-1}2b_{2t}4^{2t}+4\sum_{t=0}^{p-1}4^{2t}-4^{2\cdot\frac{p-1}{2}+1}+\sum_{\substack{t=0\\t\neq\frac{p-1}{2}}}^{p-1}2b_{2t+1}4^{2t+1}\notag
\\ &= 2\sum_{t=1}^{p-1}\frac{1-\left(\frac{2t}{p}\right)}{2}4^{2t}+2\sum_{\substack{t=0\\t\neq\frac{p-1}{2}}}^{p-1}\frac{1-\left(\frac{2t+1}{p}\right)}{2}4^{2t+1}
 +1-4^p+4\sum_{t=0}^{p-1}4^{2t}\label{e7} \notag\\
&\equiv \left\{ \begin{array}{ll}
9\sum\limits_{t=0}^{p-1}4^{2t}-
2\sum\limits_{t=1}^{p-1}\left(\frac{t}{p}\right)4^t-2\pmod{4^p-1}\\
9\sum\limits_{t=0}^{p-1}4^{2t}+2\pmod{4^p+1}.
\end{array} \right.\textrm{(by Lemma \ref{lem2})}
\end{align}

Since $3|(4^p-1)$, then by (6) we know
$$ g^2(4)\equiv 1\pmod 3.$$
Then we get $3\nmid\gcd(g^2(4), 4^p-1)$.

Let $d_3$ be a prime divisor of $\gcd(g^2(4), 4^p-1)$. From (6) and $d_3\neq 3$ we have
$$g^2(4)\equiv -2(\sum_{t=1}^{p-1}\left(\frac{t}{p}\right)4^t+1)\pmod{d_3}.$$ Then by Lemma \ref{lem1} and $p\equiv 3\pmod 4$, we have $$1\equiv (\sum_{t=1}^{p-1}\left(\frac{t}{p}\right)4^t)^2\equiv -p+\frac{4^p-1}{3}\equiv -p\pmod{d_3}.$$ Hence, we have $d_3|(p+1)$. From $4^{d_3-1}\equiv 1\pmod{d_3}$ and $4^p\equiv 1\pmod{d_3}$ we get $p|(d_3-1)$ which is a contradiction. Therefore
\begin{equation}
\gcd(g^2(4), 4^p-1)=1.
\end{equation}
(ii) Now we determine
$\gcd(g^2(4), 4^p+1)$.

Since $5\mid (4^p+1)$, then by (6) we get
$$g^2(4)\equiv -\sum_{t=0}^{p-1}4^{2t}+2\equiv -p+2\pmod 5.$$
Hence we get $5|\gcd(g^2(4), 4^p+1)$ only when $5|(p-2)$.

Assume that $5|(p-2)$ and $25|(4^p+1)$, then from Lemma \ref{lem3} we have $p=5$ which contradicts with $5|(p-2)$. It then follows that $25\nmid\gcd(g^2(4), 4^p+1)$.
Assume that $d_4$ is a prime divisor of $g^2(4)$ and $4^p+1$ such that $d_4\neq 5$.
Then by (6) we get
\begin{align*}
 g^2(4)\equiv 2\pmod{d_4}
                               \end{align*}
which implies $d_4=1$. Hence we have
\begin{align}
\gcd(g^2(4), 4^p+1)
 = \left\{ \begin{array}{ll}
5, & \textrm{if $5|(p-2)$ }\\
1, & \textrm{otherwise}.
\end{array} \right.\label{e10}
\end{align}
Combining with (7-8) and the definition of the 4-adic complexity, the result is proven.
\end{proof}
The 4-adic complexity of the sequence $g^3$ with period $2^{n+1}-2$ in Definition \ref{d3} is given as follows.
\begin{theorem}\label{t3}
For the quaternary sequence $g^3$ with period $2^{n+1}-2$ in Definition \ref{d3}, we have
$$\Phi(g^3)=\log_45\cdot (4^{2^n-1}-1).$$
\begin{proof}
With the symbols the same as before, we have
\begin{align}
g^3(4) &=\sum_{t=0}^{2^{n+1}-3}\phi(u_t, v_t)4^t\notag\\
& =\sum_{\substack{t=0\\t\in \{1\}\times \overline{D}_0}}^{2^{n+1}-3}4^t+\sum_{\substack{t=0\\t\in \{0\}\times D_0}}^{2^{n+1}-3}2\cdot4^t+\sum_{\substack{t=0\\t\in \{1\}\times D_0}}^{2^{n+1}-3}3\cdot4^t\notag\\
& =\sum_{\substack{t=0\\2\nmid t, s_t=0}}^{2^{n+1}-3}4^t+2\cdot\sum_{\substack{t=0\\2|t, s_t=1}}^{2^{n+1}-3}4^t+3\cdot\sum_{\substack{t=0\\2\nmid t, s_t=1}}^{2^{n+1}-3}4^t\notag\\
& =\sum_{\substack{t=0\\2\nmid t}}^{2^{n+1}-3}(1-s_t)4^t+2\cdot\sum_{\substack{t=0\\2|t}}^{2^{n+1}-3}s_t\cdot4^t+3\cdot\sum_{\substack{t=0\\2\nmid t}}^{2^{n+1}-3}s_t\cdot4^t\notag\\
& =\sum_{\substack{t=0\\2\nmid t}}^{2^{n+1}-3}4^t+2\sum_{\substack{t=0}}^{2^{n+1}-3}s_t\cdot4^t\notag\\
& =\sum_{t=0}^{2^n-2}4^{2t+1}+2\sum_{t=0}^{2^n-2}s_t\cdot4^t\cdot(1+4^{2^n-1})\label{e21}\\
& =4\cdot\frac{4^{2^n-1}+1}{5}\cdot\frac{4^{2^n-1}-1}{3}+2\sum_{t=0}^{2^n-2}s_t\cdot4^t\cdot(1+4^{2^n-1}).\label{e11}
\end{align}
(i) Firstly, we determine
$\gcd(g^3(4), 4^{2^n-1}+1)$.

By (\ref{e11}) and $\frac{4^{2^n-1}-1}{3}=\sum\limits_{t=0}^{2^n-2}4^t\equiv 1\pmod 5$, we have $$g^3(4)\equiv 4\cdot \frac{4^{2^n-1}+1}{5}\pmod {4^{2^{n-1}}+1}.$$
It then follows that
\begin{equation}\label{e19}
\gcd(g^3(4), 4^{2^n-1}+1)=\gcd(4\cdot \frac{4^{2^n-1}+1}{5}, 4^{2^{n-1}}+1)=\frac{4^{2^n-1}+1}{5}.
\end{equation}
(ii) Secondly, we determine
$\gcd(g^3(4), 4^{2^n-1}-1)$.
By (\ref{e11}) and $\frac{4^{2^n-1}+1}{5}=\sum\limits_{t=0}^{2^n-2}(-4)^t \equiv 1\pmod 3$, we have
\begin{equation}\label{e12}
g^3(4)\equiv 4\cdot\sum_{t=0}^{2^n-2}s_t4^t +\frac{4^{2^n-1}-1}{3}\pmod {4^{2^{n-1}}-1}.
\end{equation}
Since $s_t\in \{0, 1\}$, we have $s_t=\frac{1-(-1)^{s_t}}{2}$.  Then from (\ref{e12}), we get
\begin{align*}
g^3(4)&\equiv 4\cdot \sum_{t=0}^{2^n-2}\frac{1-(-1)^{s_t}}{2}\cdot 4^t+\frac{4^{2^n-1}-1}{3}\\
&\equiv 2\cdot\frac{4^{2^n-1}-1}{3}-2\sum_{t=0}^{2^n-2}(-1)^{s_t}\cdot 4^t+\frac{4^{2^n-1}-1}{3}\\
&\equiv -2\sum_{t=0}^{2^n-2}(-1)^{s_t}\cdot 4^t\pmod {4^{2^{n-1}}-1}.
\end{align*}
Hence, we have
\begin{equation}\label{e13}
\gcd(g^3(4),4^{2^{n-1}}-1)=\gcd(-\sum_{t=0}^{2^n-2}(-1)^{s_t}4^t, 4^{2^n-1}-1).
\end{equation}
Let $r$ be a prime divisor of $\gcd(g^3(4),4^{2^{n}-1}-1)$. Then from (\ref{e13}), we get
$$0\equiv g^3(4)\equiv \sum_{t=0}^{2^n-2}(-1)^{s_t}\cdot 4^t\pmod r.$$
It then follows that
\begin{align}
0 &\equiv \sum_{t=0}^{2^n-2}(-1)^{s_t}\cdot 4^t\sum_{l=0}^{2^n-2}(-1)^{s_l}\cdot 4^{-l}\notag\\
& \equiv\sum_{t, l=0}^{2^n-2}(-1)^{s_t+s_l}4^{t-l}\notag\\
& \equiv\sum_{l=0}^{2^n-2}\sum_{f=0}^{2^n-2}(-1)^{s_{f+l}+s_l}4^f\notag\\
& \equiv\sum_{f=0}^{2^n-2}4^f\sum_{l=0}^{2^n-2}(-1)^{s_{f+l}+s_l}\pmod r\notag
\end{align}
Then from the fact that $s$ is a binary sequence of period $2^n-1$ with ideal autocorrelation and (\ref{e111}), we get
\begin{equation}\label{e116}
0\equiv 2^n-1+\sum_{f=1}^{2^n-2}(-1)4^f\equiv 2^n-\frac{4^{2^n-1}-1}{3}\pmod r.
\end{equation}
By (\ref{e21}), we know $g^3(4)\equiv\sum_{t=0}^{2^n-2}s_t+2^n-1\pmod 3$. Since $s$ is a binary sequence of period $2^n-1$ with ideal autocorrelation, then from Lemma \ref{lem0} we have  $\sum_{t=0}^{2^n-2}s_t=2^{n-1}$. It then follows that $g^3(4)\equiv2^{n-1}+2^n-1\equiv 2\pmod 3$.
Therefore we get $3\nmid g^3(4)$. It then follows that $r\neq 3$ which implies $$r\bigg|\frac{4^{2^n-1}-1}{3}.$$ By (\ref{e116}) we get
$0\equiv 2^n\pmod r$ which is a contradiction. Hence, we have
\begin{equation}\label{e18}
\gcd(g^3(4),4^{2^{n-1}}-1)=1.
\end{equation}
Combining with (\ref{e19}), (\ref{e18}) and the definition of the 4-adic complexity, the result is proven.
\end{proof}
\end{theorem}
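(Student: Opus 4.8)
The plan is to reduce, via Definition \ref{d4}, to the computation of $\gcd(g^3(4),\,4^{N}-1)$ with $N=2^{n+1}-2=2(2^{n}-1)$. The key structural observation is that $4^{N}-1=(4^{2^{n}-1}-1)(4^{2^{n}-1}+1)$, and these two factors are coprime: their gcd divides $2$, while $4^{2^{n}-1}-1$ is odd. Hence $\gcd(g^3(4),4^{N}-1)$ is the product of $\gcd(g^3(4),4^{2^{n}-1}-1)$ and $\gcd(g^3(4),4^{2^{n}-1}+1)$, and I can analyze the two factors independently and multiply at the end.

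First I would compute $g^3(4)=\sum_{t=0}^{N-1}\phi(u_t,v_t)4^{t}$ in closed form. Using the explicit values of the Gray map on the four pairs $(u_t,v_t)$ together with the CRT description of $u,v$ in Definition \ref{d3}, I would split the sum according to the parity of $t$ (equivalently, the first CRT coordinate) and according to whether $t\bmod(2^{n}-1)$ lies in $D_0$ or $\overline{D}_0$ (equivalently, the value of the underlying bit $s_{t}$). Collecting terms, the aim is to reach the clean form
\[
g^3(4)=4\cdot\frac{4^{2^{n}-1}+1}{5}\cdot\frac{4^{2^{n}-1}-1}{3}+2\bigl(1+4^{2^{n}-1}\bigr)\sum_{t=0}^{2^{n}-2}s_{t}4^{t},
\]
which isolates a fixed part from a part carrying the ideal-autocorrelation sequence $s$. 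The factor $4^{2^{n}-1}+1$ is then the easy one: modulo it the summand containing $1+4^{2^{n}-1}$ vanishes, leaving $g^3(4)\equiv M'\cdot 4\cdot\frac{4^{2^{n}-1}-1}{3}$ with $M'=\frac{4^{2^{n}-1}+1}{5}$. Since $4^{2^{n}-1}+1=5M'$, the gcd factors as $M'\cdot\gcd\bigl(4\cdot\frac{4^{2^{n}-1}-1}{3},5\bigr)$, which equals $M'$ because $\frac{4^{2^{n}-1}-1}{3}=\sum_{t=0}^{2^{n}-2}4^{t}\equiv1\pmod5$ is coprime to $5$.

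The factor $4^{2^{n}-1}-1$ is where the real work lies, and I expect it to be the main obstacle. Reducing modulo it (using $4^{2^{n}-1}\equiv1$) and then writing $s_t=\tfrac{1-(-1)^{s_t}}{2}$, I would bring $g^3(4)$ to the form $-2\sum_{t=0}^{2^{n}-2}(-1)^{s_t}4^{t}$, so that $\gcd(g^3(4),4^{2^{n}-1}-1)=\gcd\bigl(\sum_{t}(-1)^{s_t}4^{t},\,4^{2^{n}-1}-1\bigr)$. For a prime $r$ dividing this gcd, the trick is to multiply $\sum_t(-1)^{s_t}4^{t}$ by the ``reciprocal'' sum $\sum_l(-1)^{s_l}4^{-l}$ modulo $r$; after re-indexing this becomes $\sum_f 4^{f}\sum_l(-1)^{s_{f+l}+s_l}$, whose inner sums are exactly the binary autocorrelation values of $s$. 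Invoking the ideal-autocorrelation property \eqref{e111} (all out-of-phase values equal $-1$) collapses the whole expression to $2^{n}-\frac{4^{2^{n}-1}-1}{3}\equiv0\pmod r$.

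To finish, I would exclude $r=3$ by a direct mod-$3$ evaluation: Lemma \ref{lem0} gives the weight $\sum_t s_t=2^{n-1}$, whence $g^3(4)\equiv 2^{n-1}+2^{n}-1\equiv2\pmod3$, so $r\neq3$ and therefore $r\mid\frac{4^{2^{n}-1}-1}{3}$; the congruence above then forces $r\mid 2^{n}$, which is impossible for the odd prime $r$. Hence $\gcd(g^3(4),4^{2^{n}-1}-1)=1$. Combining the two factors gives $\gcd(g^3(4),4^{N}-1)=\frac{4^{2^{n}-1}+1}{5}$, so that $\Phi_4(g^3)=\log_4\frac{4^{N}-1}{(4^{2^{n}-1}+1)/5}=\log_4\bigl[5(4^{2^{n}-1}-1)\bigr]$, as claimed. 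The crux throughout is the autocorrelation manipulation in the third paragraph: translating the product of the sum with its reciprocal into a genuine shifted autocorrelation of $s$ and applying the $-1$ property correctly, with Lemma \ref{lem0} supplying the separate ingredient needed to rule out the spurious prime $r=3$.
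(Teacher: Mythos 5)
Your proposal is correct and follows essentially the same route as the paper: the same closed form for $g^3(4)$, the same splitting of $4^{2(2^n-1)}-1$ into the coprime factors $4^{2^n-1}\pm1$, the same reduction to $-2\sum_t(-1)^{s_t}4^t$ modulo $4^{2^n-1}-1$, the same autocorrelation product argument for a prime divisor $r$, and the same mod-$3$ computation via Lemma \ref{lem0} to exclude $r=3$. No substantive differences to report.
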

We give several examples to demonstrates our main results.
\begin{example}
For $p=5\equiv1\pmod 4$, we have $\mathbb{F}_5^{\ast}=\langle2\rangle$ and $b_0=b_5=0, c_0=c_5=1, b_1=b_6=0, c_1=c_6=0, b_2=b_7=1, c_2=c_7=1, b_3=b_8=1, c_3=c_8=1, b_4=b_9=0, c_4=c_9=0$. Then according to the definition of the sequence $g^1$, we get $g^1=(0, 1, 2, 3, 0, 3, 0, 3, 2, 1)$. Then we have
\begin{align*}
&\gcd(g^1(4), 4^{10}-1)\\
=&\gcd(1\times 4+2\times 4^2+3\times4^3+3\times4^5+3\times4^7+2\times4^8+4^9, 4^{10}-1)\\=&3
\end{align*}
which implies $\Phi_4(g^1)=\log_4\frac{4^{10}-1}{3}$. This result is consistent with Theorem \ref{t1}.
\end{example}
\begin{example}
For $p=13\equiv1\pmod 4$, we have $\mathbb{F}_{13}^{\ast}=\langle2\rangle$. According to the definition of the sequence $g^1$, we have $g^1=(0, 1, 2, 1, 0, 3, 2, 3, 2, 1, 0, 3, 0, 3, 0, 3, 0, 1, 2, 3, 2, 3, 0, 1, 2, 1)$. Then we have
\begin{align*}
&\gcd(g^1(4), 4^{26}-1)\\
=&\gcd(1\times 4+2\times 4^2+4^3+3\times4^5+2\times4^6+3\times4^7+2\times4^8+4^9+3\times4^{11}+3\times4^{13}\\
&+3\times4^{15}+4^{17}+2\times4^{18}+3\times4^{19}+2\times4^{20}+3\times4^{21}+4^{23}+2\times4^{24}+4^{25}, 4^{26}-1)\\=&15
\end{align*}
which implies $\Phi_4(g^1)=\log_4\frac{4^{26}-1}{15}$. This result is consistent with Theorem \ref{t1}.
\end{example}
\begin{example}
For $p=3\equiv3\pmod 4$, we have $\mathbb{F}_3^{\ast}=\langle2\rangle$ and $b_0=b_3=0, c_0=c_3=1, b_1=b_4=0, c_1=c_4=0, b_2=b_5=1, c_2=c_5=1$. According to the definition of the sequence $g^2$, we have $g^2=(1, 1, 2, 0, 0, 3)$. Then we get
\begin{align*}
&\gcd(g^2(4), 4^{6}-1)\\
=&\gcd(1+1\times 4+2\times 4^2+3\times4^5, 4^{6}-1)\\=&1
\end{align*}
which implies $\Phi_4(g^2)=\log_4(4^{6}-1)$. This result is consistent with Theorem \ref{t2}.
\end{example}
\begin{example}
For $p=7\equiv 3\pmod 4$, according to the definition of the sequence $g^2$, we have $g^2=(1, 1, 0, 3, 0, 3, 2, 0, 0, 1, 2, 1, 2, 3)$. Then we get
\begin{align*}
&\gcd(g^2(4), 4^{14}-1)\\
=&\gcd(1+1\times 4+3\times 4^3+3\times4^5+2\times4^6+4^9+2\times4^{10}+4^{11}+2\times4^{12}+3\times4^{13}, 4^{14}-1)\\=&5
\end{align*}
which implies $\Phi_4(g^2)=\log_4(\frac{4^{14}-1}{5})$. This result is consistent with Theorem \ref{t2}.

\end{example}

\begin{example}
For $n=4$, we have the binary $m$-sequence $s=(0, 0, 0, 1, 0, 0, 1, 1, 0, 1,  0, 1, 1, 1, 1)$ of period 15,\ according to the definition of the sequence $g^3$, we have $$g^3=(0, 1, 0, 3, 0, 1, 2, 3, 0, 3, 0, 3, 2, 3, 2, 1, 0, 1, 2, 1, 0, 3, 2, 1, 2, 1, 2, 3, 2, 3).$$
 Then we get
\begin{align*}
&\gcd(g^3(4), 4^{30}-1)\\
=&\gcd(4+3\times 4^3+4^5+2\times4^6+3\times4^7+3\times4^9+3\times4^{11}+2\times4^{12}+3\times4^{13}
+2\times4^{14}+4^{15}+4^{17}\\&+2\times4^{18}+4^{19}+3\times4^{21}+2\times4^{22}+4^{23}+2\times4^{24}+4^{25}+2\times4^{26}
+3\times4^{27}+2\times4^{28}+3\times4^{29}, 4^{30}-1)\\=&214748365\\=&\frac{4^{15}+1}{5}
\end{align*}
which implies $\Phi_4(g^3)=\log_4(5\times(4^{15}-1))$. This result is consistent with Theorem \ref{t3}.
\end{example}
\begin{remark}
For a sequence $s$ with period $N$, the 4-adic complexity $\Phi_4(s)$ should exceed $\frac{N-16}{6}$ to resist the rational approximation algorithm.  Theorems 5-7  show that the 4-adic complexity of the balanced quaternary sequences of period $2p$ and $2(2^n-1)$ with ideal autocorrelation defined in \cite{Kim} and \cite{Jang} is larger than $\frac{2p-16}{6}$ and $\frac{2(2^n-1)-16}{6}$ respectively.
\end{remark}

\section{Conclusion}
In this paper, we study the 4-adic complexity of the balanced quaternary sequences
with ideal autocorrelation constructed in \cite{Kim} and \cite{Jang}, respectively. It
turns out that the balanced quaternary sequences
with ideal autocorrelation constructed in these two papers are safe enough to resist the attack of the rational approximation algorithm. It would be interesting to investigate
the 4-adic complexity of more quaternary sequences with
good autocorrelation and balance property.

\end{document}